\definecolor{darkred}{rgb}{0.75,0,0}
\newtheorem{theorem}{Theorem}[section]
\begin{document}
%
\newcommand{\name}{FileBounty\xspace}
\title{\name: Fair Data Exchange}

\author{%
  \IEEEauthorblockN{%
    Simon Janin\IEEEauthorrefmark{2}\IEEEauthorrefmark{1},
    Kaihua Qin\IEEEauthorrefmark{3}\IEEEauthorrefmark{1},
    Akaki Mamageishvili\IEEEauthorrefmark{4} and
    Arthur Gervais\IEEEauthorrefmark{3}%
  }%
  \IEEEauthorblockA{\IEEEauthorrefmark{2} X80 Security}%
  \IEEEauthorblockA{\IEEEauthorrefmark{3} Imperial College London}%
  \IEEEauthorblockA{\IEEEauthorrefmark{4} ETH Zurich}%
}
\maketitle
\begingroup\renewcommand\thefootnote{\IEEEauthorrefmark{1}}
\footnotetext{These authors contributed equally to this work.}
\endgroup
\thispagestyle{plain}
\pagestyle{plain}

\begin{abstract}
Digital contents are typically sold online through centralized and custodian marketplaces, which requires the trading partners to trust a central entity. We present \name, a fair protocol which, assuming the cryptographic hash of the file of interest is known to the buyer, is trust-free and lets a buyer purchase data for a previously agreed monetary amount, while guaranteeing the integrity of the contents. To prevent misbehavior, \name guarantees that any deviation from the expected participants' behavior results in a negative financial payoff; i.e.\ we show that honest behavior corresponds to a subgame perfect Nash equilibrium. Our novel deposit refunding scheme is resistant to extortion attacks under rational adversaries. If buyer and seller behave honestly, \name's execution requires only three on-chain transactions, while the actual data is exchanged off-chain in an efficient and privacy-preserving manner. We moreover show how \name enables a flexible peer-to-peer setting where multiple parties fairly sell a file to a buyer.
\end{abstract}


%
\IEEEpeerreviewmaketitle

\section{Introduction}
Fair multi-party exchange protocols have been extensively studied~\cite{kilincc2015optimally,garay2004efficient, gauthierdickey2014secure, bao1998efficient, asokan1998optimistic, dodis2003breaking, park2003constructing, zhou2000some, garay2003timed}. In light of the emergence of permissionless blockchains (e.g.\ Bitcoin), however, the design of fair exchange protocols has experienced a renaissance~\cite{banasik2016efficient,bentov2014use,bartoletti2016constant,andrychowicz2014fair, dziembowski2018fairswap}. The blockchain's properties allow to construct marketplaces without reliance on a centralized trusted third party (TTP), and instead enable the automated dispute mediation and integrated monetary transfers among buyers and sellers.

In this work, we first consider the following problem: we are given two participants, a seller and a buyer that are willing to exchange a digital file for a monetary amount. We assume that both parties mutually do not trust each other, and we therefore require a fair data exchange protocol. Our definition of fairness implies that a dishonest participant always suffers negative payoffs. In a second step, we consider the problem of a peer-to-peer (P2P) file exchange protocol similar to Bittorrent~\cite{cohen2008bittorrent}, whereby a buyer intents to purchase a file from multiple sellers which are expect to be rewarded proportionally to their provided contents.

We propose \name, a fair protocol that allows a buyer to purchase a digital file from a seller for a monetary amount. \name relies on the assumption that the buyer knows the cryptographic hash\footnote{Note in the extended P2P exchange protocol, we term the identity of a file \textit{fingerprint}.} of the file of interest (for example acquired through a semi-trusted third party, such as VirusTotal or a Torrent website\footnote{Note that we consider the semi-trusted third party offline, because it is not required to interact during or after the file exchange.}). Given the file hash, a buyer e.g.\ publishes on a blockchain an openly available bounty to purchase the file of interest from a willing seller. Alternatively, a seller can advertise its willingness to sell a file, similar to Bittorrent magnet links, connect with a buyer and complete the file and value transfer successfully. The file transfer is executed off-chain, i.e.\ the actual file content is not written on the bockchain; thus both protecting the privacy of the file and remaining scalable to any filesize. Even in the case of disputes, no file content needs to be disclosed as our on-chain dispute process mediates via zkSNARKs. The file is moreover exchanged gradually and integrity protected, i.e.\ for each file chunk, which consists of multiple file blocks, transferred, the buyer can efficiently verify that a chunk is a valid chunk of the file of interest. We further extend \name to a distributed P2P setting, whereby a buyer can purchase the file of interest from multiple sellers simultaneously, allowing the transfer speed to scale proportionally to the number of sellers and their network connectivity.

We adopt a rational adversarial model, under which, for each transferred chunk, the seller is guaranteed to receive a proportional payoff, as long as he doesn't become unresponsive (e.g.\ times-out). The smart contract verifies the proper execution of the protocol to assure the correct participant payoffs and automatically handles dispute mediation. We introduce a deposit refunding scheme that guarantees that \name is secure against a rational adversary, i.e.\ the adversary will only follow the honest protocol which corresponds to the subgame-perfect Nash equilibrium. In other words, any deviation for any player (buyer, seller) from the honest protocol behavior will result in a financial loss to this player.
For our design, we make use of the chaining property of Merkle-Damg\aa rd-based hash functions (cf.\ MD5~\cite{rivest1992md5}, SHA1~\cite{eastlake2001us}, SHA2~\cite{standard2002fips}, etc.)

Potential application opportunities of \name are extensive: 
(i) decentralized digital marketplaces,
(ii) purchase of sparsely seeded torrent files, (iii) decentralized file backups, 
(iv) brute-forcing of password hashes given their respective salts and 
(v) exchange of malware samples\footnote{i.e.\ \url{http://www.virustotal.com}}.

As a summary our contributions are as follows:
\begin{description}
    
	\item[Fair one-to-one Exchange] We propose \name{}, a fair data exchange protocol which allows a buyer (who knows the cryptographic hash of the file of interest), to make a publicly available bounty call for a digital file.
	
	\item[Fair Multi-Party Exchange] We present a novel multi-party exchange protocol based on BLS signatures, whereby a buyer can fairly purchase a file from multiple sellers.
	
	\item[Efficient Integrity Verification] By taking advantage of the Merkle-Damg\aa rd construction (e.g., in SHA256), our design allows the buyer (who only knows the file hash) to verify for each received file chunk, that the file chunk belongs to the file of interest.

	\item[Privacy-Preserving] We conduct the on-chain dispute mediation in a privacy-preserving manner through zkSNARKs.
	
	\item[Nash Equilibrium] We show that \name is secure under a rational adversary, i.e.~we show that honest behavior is the subgame perfect Nash equilibrium. Misbehavior of any of the involved parties is detected while the data is gradually exchanged. Contrary to previous schemes, ours is resistant to extortion attacks under the rational adversary.
\end{description}

The remainder of the paper is organized as follows. Section~\ref{sec:background} covers the background, Section~\ref{sec:design} presents the system and adversarial model and
Section~\ref{sec:datails} details \name{}'s design. Section~\ref{sec:securityanalysis} focuses on the security analysis while we present our evaluation in Section~\ref{sec:evaluation}. We overview related work in Section~\ref{sec:relatedwork} and conclude the paper in Section~\ref{sec:conclusion}.

\section{Background}\label{sec:background}
In this section, we discuss the required background.
\begin{description}
\item[Blockchain]
    In permissionless blockchains, any peer is free to join or leave the network at any moment. Proof-of-Work (PoW), introduced by ~\cite{dwork1992pricing}, is a computationally expensive puzzle that is the currently widest deployed mechanism to operate permissionless blockchains (e.g.\ Bitcoin). Besides the simple transfer of monetary value, transactions can execute user-defined programs, commonly referred to as smart contracts, which alter the state of the blockchain. For an in-depth background, we refer the reader to~\cite{tschorsch2016bitcoin}.

\item[Game Theory and Cryptoeconomics] Fair multi-party protocols can be modeled as non-cooperative games, where each party acts rationally in its own best interest. The subgame perfect Nash equilibrium~\cite{osborne1994course} is a game theoretic solution concept of the extensive form game that we use in this paper to model the sequential interaction between the seller and the buyer. In this solution, each player plays the optimal strategy at each possible state. If any player would change its strategy in this solution, the expected financial gain is a non-positive number, as such, no player has an incentive to change its strategy in any state.
    
    
\item[Merkle-Damg\aa rd and Sponge Construction]
    The Merkle-Damg\aa rd construction~\cite{goldwasser1996lecture} allows to build a collision-resistant cryptographic hash function with an arbitrary size input from a collision-resistance one-way compression function. The arbitrary-length input data is divided into fixed-length data chunks and use an underlying cryptographic hash function in order to produce a final hash (cf. Figure~\ref{fig:merkledamgard}). For security purposes, the length of the actual input data needs to be appended at the end of the data. There are alternatives to the Merkle-Damg\aa rd construction, like for example the Sponge construction which is the basis for the SHA-3 hash function~\cite{SHA3}.
    
\item[Boneh-Lynn-Shacham signature scheme] The Boneh-Lynn-Shacham (BLS) signature scheme~\cite{boneh2001short} is a short digital signature technique using a bilinear pairing on a curve which supports signature aggregation~\cite{boneh2003aggregate} (e.g. compressing multiple signatures signed by different private keys on different messages in a single signature).

\item[Zero Knowledge Proofs] Zero-knowledge proofs (ZKP)~\cite{feige1988zero} allow a prover to prove to another verifier that a statement is true but without revealing any information of it. zkSNARKs~\cite{bitansky2012extractable} are succinct ZKP methods in which the size of the proof is small enough to be verified efficiently (e.g.~by a smart contract). zkSNARKs are moreover non-interactive i.e.\ no interaction is required between the prover and the verifier.
\end{description}
    
\begin{figure}[!htb]
\centering
\includegraphics[width=0.2\textwidth]{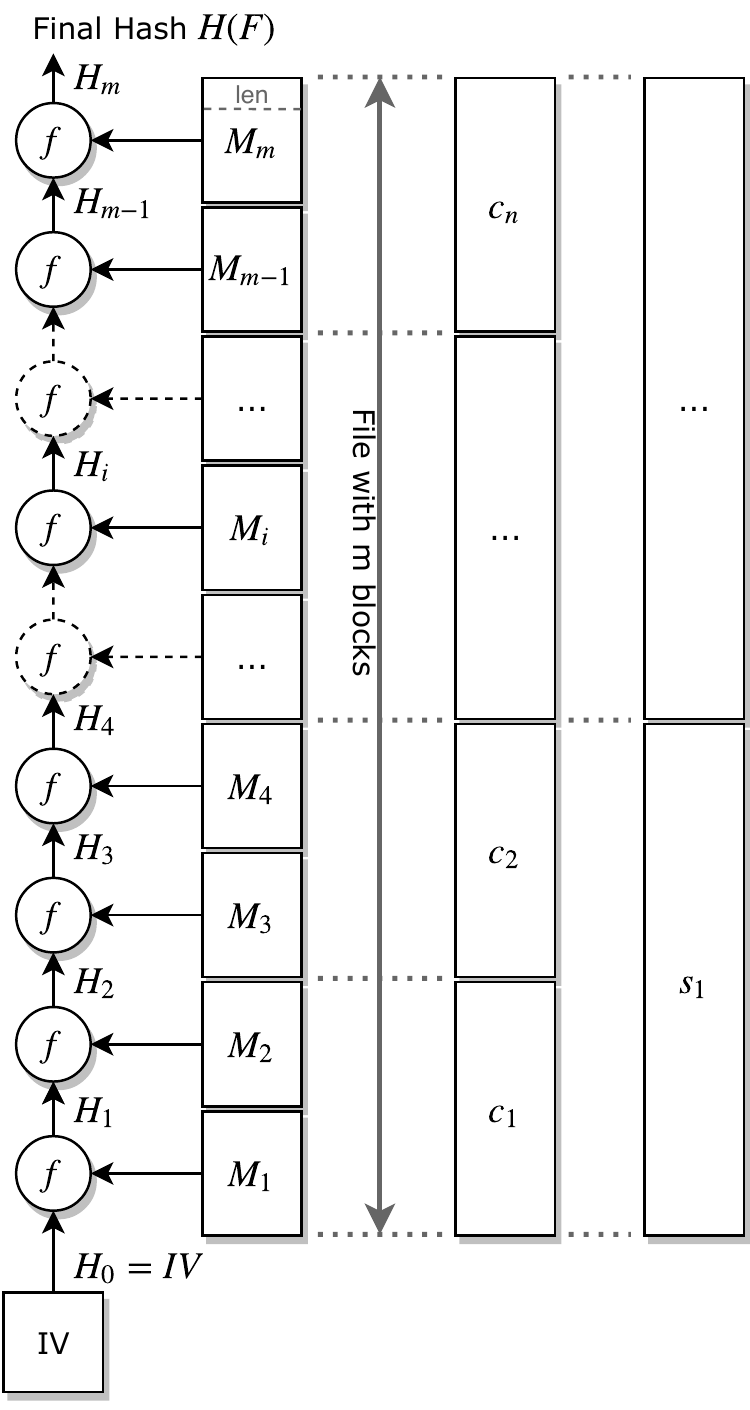}
\caption{Merkle-Damg\aa rd construction. The input file is divided into $m$ blocks, and the construction is initialized with a fixed initialization vector (IV). Intermediate states ($H_i$) represent temporary hashes, while the final file hash is output at the top. Note that the length of the file ($len$) is attached to the end of the file. Note that $H(F)$ corresponds to $H_m$.}
\label{fig:merkledamgard}
\end{figure}

\section{\name: Overview}\label{sec:design}
In this Section, we first outline the system and adversarial model followed by the detailed protocol specification.

\subsection{System Model}
Four parties are involved in our system: (i) a seller, (ii) buyer (iii) semi-trusted third party which provides the cryptographic hash of the file of interest and (iv) a smart contract.
\begin{description}
    \item[Permissionless Blockchain] We assume the existence of a permissionless (e.g.\ Proof-of-Work) blockchain that supports the convenient deployment of smart contracts. Participants have identities in the form of public keys that are tied to their respective private keys ($Pub_x/Priv_x$) to perform e.g.\ digital signatures.
    \item[Buyer] The buyer $B$ is the entity which is willing to purchase the file $F$ with cryptographic hash $H(F)$ from a seller for a given price $F_p$. 
    \item[Seller] The seller $S$ is the entity which claims to own and is willing to transfer the digital file $F$, corresponding to the cryptographic hash $H(F)$, to a buyer.
    \item[Semi-trusted third party] The semi-trusted third party (STTP) is a neutral party that has no incentive to report incorrect hashes. The STTP plays the role of indirect coordinator, providing the cryptographic hashes of files to buyers.
    
    \end{description}

    To exchange the file $F$, we assume a communication protocol external to the blockchain, which requires the buyer or the seller to directly connect (e.g.\ over WebRTC). 
    
    \begin{description}
    \item[Iterated Hash Functions] Our protocol relies on iterated hash functions~\cite{lucks2004design} (such as SHA256). An iterated hash function takes a message $M \in \{0,1\}^{*}$ of any length to compute an s-bit output H(M). For an iterated hash, we split the padded message into $m$ fixed-sized parts $M_1, M_2, ..., M_m \in \{0, 1\}^{l}$. An iterated hash $H$ iterates an underlying compression function $f$, and the final hash is given by $f(f(... f(f(H_0, M_1), M_2) ...), M_m)$, where $H_0$ is some constant initial value usually denoted as Initialization Vector ($IV$).
\end{description}
We term the fixed-sized parts of a file fed into the integrated hash functions \textit{blocks}. Adjacent file blocks are bundled to \textit{chunks} as the transmission unit. In the P2P scheme, files are split into \textit{segments} which are exchanged separately. Segments are exclusive parts of a file containing several file chunks. Figure~\ref{fig:merkledamgard} illustrates the relationship between the three terms.

\subsection{High-Level Operation}

We outline the high-level operations of \name{} in Figure~\ref{fig:overview}, which is divided into four phases.

\begin{description}
    \item[Setup Phase (a)] 
    The sellers start to serve a file $F$ by submitting a set of requisite parameters and the deposit $D_S$ to the smart contract. The buyer who is interested in $F$ can pay the file price $F_P$ with the deposit $D_B$ and initiate off-chain communication channels to the alive sellers.
    \item[File Transfer Phase (b)] 
    Once the on-chain transactions confirmed in the setup phase and the off-chain channels are established, the protocol enters into the file exchange phase. The buyer receives $F$ in chunks from one or multiple sellers. For each transferred chunk, the client verifies the chunk's integrity given the file hash $H(F)$ and the seller receives a signed acknowledgment from the buyer.
    \item[Dispute Phase (c)] {If one participant misbehaves, the aggrieved side is able to claim restitution via the smart contract. If both sides behave honestly until the end of transmission, the protocol will distribute payoff as expected.}
    \item[Payoff Phase (d)] {In the payoff phase, the smart contract allocates the file price and the deposits to the buyer and the seller according to the execution of the protocol.}
\end{description}

\begin{figure}[!htb]
\centering
\includegraphics[width=0.8\columnwidth]{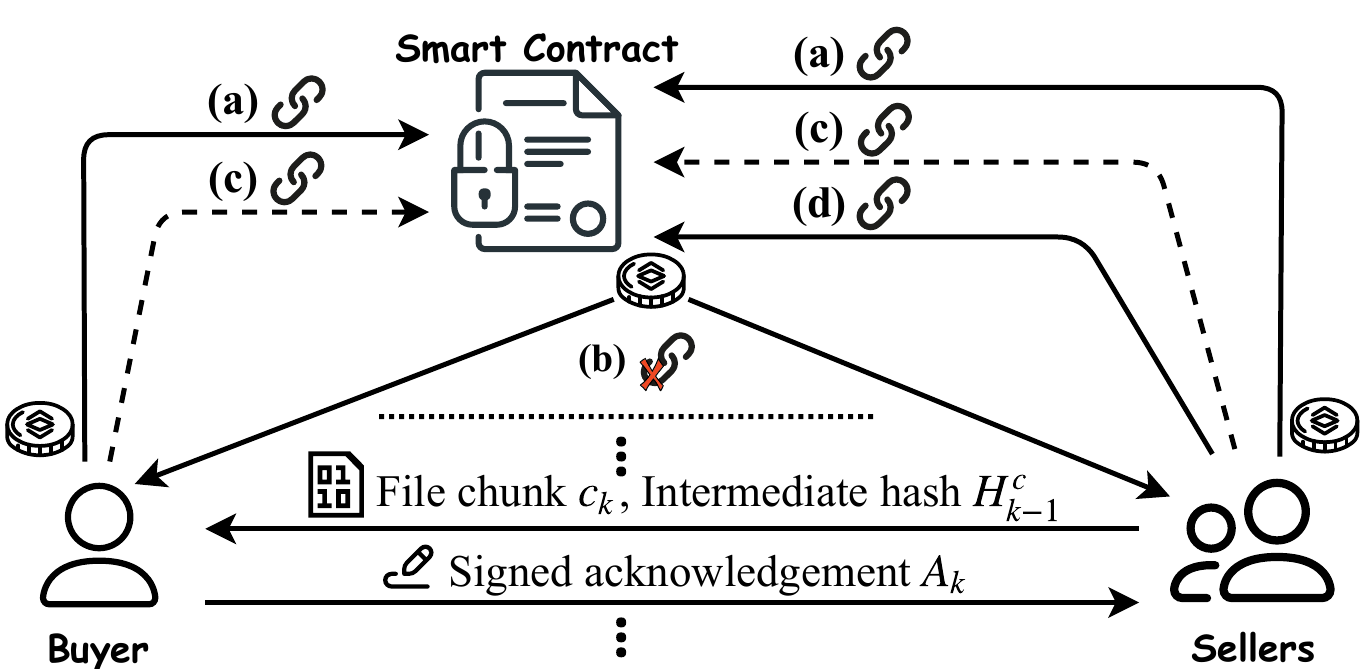}
\caption{Four phases of  \name{} --- (a) setup phase; (b) file exchange phase; (c) dispute phase; (d) payoff phase.}
\label{fig:overview}
\end{figure}

\subsection{Main Properties}
\begin{description}
    \item[Permissionless Blockchain] Blockchain transactions enable a certain transaction throughput and bear a cost borne by the buyer and seller.
    \item[Buyer] The buyer is able to establish a secure communication channel to the seller and interacts with a blockchain-based smart contract. Prior to the exchange, the buyer provides a monetary deposit of value $D_B$. The file of interest is worth $F_v$ to the buyer.
    \item[Seller] The seller is able to establish a secure communication channel to the buyer and interacts with the blockchain-based smart contract. Prior to the file exchange, the seller provides a monetary deposit of value $D_S$. The seller's costs for selling the file amounts to $F_c$.
    \item[Semi-trusted third party] The STTP only assists in the trade and is not involved in the exchange of any monetary nor data transfer between the buyer and the seller.
\end{description}

\subsection{Attacker Models}
\label{sec:attackermodels}
We assume that the underlying blockchain is reasonably resistant against 51\% and double-spending attacks. We assume that the consensus mechanism of the blockchain achieves eventual consistency and that the transactions are non-repudiable and non-malleable. We also assume, that \name{}'s smart contract implementation is secure~\cite{luu2016making}. Additionally, we rely on the correct design and implementation of the cryptographic primitives of the blockchain. We consider the off-chain connection between the seller and buyer to be secure and stable and not controllable by the adversary. All protocol participants are assumed to be able to monitor the state changes of the smart contract and respond timely. We assume the STTP is not colluding with the seller. We also assume a rational adversary, i.e.\ an adversary that is motivated by financial gain. As such, our adversary is not interested in provoking the opponent to lose a monetary amount during the protocol execution as it would entail a loss for the adversary as well.

\begin{description}
    \item[Malicious Seller] A malicious seller might stop the transmission halfway causing the buyer to receive an incomplete file which is worthless (e.g. compressed file). A malicious seller may also dishonestly report the buyer (e.g.\ pretending not to receive the acknowledgement from the buyer) resulting in the possibility that the smart contract affirms misbehaviour of the buyer. Besides, a malicious seller can try to blackmail the buyer during the execution of the protocol for additional financial reward. We analyze the irrationality of extortion attacks in Section~\ref{attack_analysis}. In the P2P exchange, different sellers can collude with each other (e.g. interrupting one exchange thread, the buyer then can't receive the complete file).
    \item[Malicious Buyer] As the transfer process of \name{} is sequential, a malicious buyer may operate similar attacks (i.e.\ refusing to acknowledge a file chunk so that the seller can't prove the delivery of a chunk, reporting dishonestly and extortion attacks).
\end{description}






\section{\name~Details}
\label{sec:datails}
In the following we outline \name's details.
\subsection{Formal Overview}
We first depict \name in detail for the single seller case. We consider the buyer $B$ is willing to purchase the file $F$ with the hash $H(F)$ from the seller $S$ who claims to possess it. We present in Algorithm~\ref{alg:protocol_pseudocode} the pseudocode of the \name{} protocol which the seller and the buyer iteratively execute for each chunk. \name{} supports the following events:
\begin{itemize}
    \item $T_k$ corresponds to $S$ transmitting the file chunk~$c_k$.
    \item $A_k$ is the acknowledgment of file chunk $c_k$ by $B$.
    \item If one participant does not follow the honest behavior, then the other party ($x$) can report the other's misbehavior to the smart contract for chunk $c_k$ which we denote $R^x_k$.
    \item $P_k^x$ signifies that one party (x) proves to the smart contract for chunk $c_k$ to resolve a dispute.
    \item Not proving a chunk $\overline{P_k^x}$ equals to being offline ($\bot^x$).
\end{itemize}
Notations adopted in this paper are given in Table \ref{tb:notation}.

\begin{algorithm}[htb]\label{alg:protocol_pseudocode}\small
 \KwData{$Pub_B, Pub_S, F_p, D_B, D_S, ...$}
 \KwResult{$B$ gets file F, $S$ is paid $F_p$}
 initialization: set up of a secure channel\;
 \For{$k = n..1$}{
  \textit{$S$ sends chunk $c_k$ and intermediate hash $H_{k-1}^c$\;
  $B$ verifies that $f(f(...f(H_{k-1}^c,M_{CL}^k)...),M_1^k) = H_k^c$\;}
  \eIf{Integrity check passes}{
   \textit{$B$ signs acknowledgement $A_k$ and sends it to $S$\;
   $S$ verifies the validity of $A_k$\;}
    \If{$A_k$ is NOT valid}{
     \textit{$S$ reports to the smart contract that $B$ misbehaves\;}
    }
   }{
   \textit{$B$ reports to the smart contract that $S$ misbehaves\;}
  }
 }
\textit{$S$ finalizes the exchange to withdraw deposits\;} 
\caption{Outline of \name{}'s transfer protocol. Dispute mediation is discussed in Section~\ref{disputemediation}. Note that because of Merkle-Damgard's chaining property, the iterative hashing of the compression function $f$ has to be done in reverse order.}
\end{algorithm}

\begin{table}[!htb]\centering\footnotesize
\caption{Notations adopted for this work.}\label{tb:notation}
\begin{tabular}{ll}
\toprule
Notation & Description \\
\midrule
$F $     & File to be sold\\
$M_i$    & File block $i$\\
$H_i$    & Intermediate hash of $M_i$\\
$h(M)$   & Intermediate hash of block $M$\\
$c_k$    & File chunk $k$\\
$H_k^c$  & Intermediate hash of $c_k$\\
$M_i^c$  & The $i_{th}$ block in $c_k$\\
$s_q$    & File segment $ q$\\
$H_q^s$  & Intermediate hash of $s_q$\\  
$S$      & Seller of the file $F$\\
$B$      & Buyer of the file $F$ \\
$D_S$    & Deposit of the seller \\
$D_B$    & Deposit of the buyer\\
$F_v$    & File value for the buyer\\
$F_p$    & File price paid by the buyer\\
$F_c$    & File selling costs for the seller\\
$T_k$    & Seller successfully transmitted file chunk $k$\\
$A_k$    & Buyer signed ack. of file chunk $k$\\
$R^x_k$  & $x$ reports the other party faulty at chunk $k$\\
$P_k^x$  & $x$ proves to settle a dispute\\
$\bot^x$ & Party $x$ is considered offline\\
$g^x(k)$ & Rel. utility for $x$ after $k$ transferred chunks\\
\bottomrule
\end{tabular}
\end{table}


\subsubsection{Setup Phase}
In the setup phase, $B$ should accept the initial parameters set by $S$ in order to proceed. $B$ and $S$ should agree on the monetary file price $F_p$ and their respective deposits accordingly (cf.\ Section~\ref{sec:nashequilibrium}), which are required in case of dispute mediation. We assume that $B$ and $S$ have their respective public keys $Pub_B$ and $Pub_S$, which correspond to their respective blockchain addresses. The file $F$ is divided into $m$ fixed-size ($BLOCK\_SIZE$) blocks for the hash calculation. $BLOCK\_SIZE$ is determined by the chosen cryptographic hash function (e.g. 64 bytes for SHA-256). The file is transmitted in $n$ chunks. Each chunk consists of a fixed number ($CHUNK\_LEN$ abbr. $CL$) of blocks (cf.\ Equation~\ref{eq:filechunk}), s.t.\ $n=\lceil m/CL \rceil$.
\begin{equation}\label{eq:filechunk}
\begin{split}
  &c_k = \left\{
    \begin{array}{@{}l@{\thinspace}l}
       M_1\ ||\ ...\ ||\ M_{m-(n-k)CL} &\text{,\ if } k=1 \\
       M_{m-(n-k+1)CL+1}\ ||\ ...\ ||\ M_{m-(n-k)CL} &\text{,\ otherwise} \\
    \end{array}
   \right.\\
\end{split}
\end{equation}

Note the length of $c_1$ is not necessarily equal to $CL$. For the convenience of the reader but without loss of generality, we assume $c_1$ contains $CL$ blocks in the rest of this paper.  We denote the $i_{th}$ block of chunk $c_k$ as $M_i^k$ and its intermediate hash as  $h(M_i^k)$. The intermediate hash of $c_k$ is same as the hash of its last block, i.e.\ $H_k^c = h(M_{CL}^k)$. To make sure that the protocol terminates, we define as $MAX\_TIMEOUT$ the maximum time that a participant has to respond before being considered offline, which we set by default to 24 hours. As a permissionless blockchain acts as a timestamping service, we can choose to either use the block height or the block timestamp (UNIX timestamps). Summarizing, in the setup phase $B$ and $S$ should agree on the file price $F_p$, deposits $D_B$, $D_S$, $BLOCK\_SIZE$, $CHUNK\_LEN$ and $MAX\_TIMEOUT$ and deposit to the smart contract respectively start the transmission.

\subsubsection{File Transfer Phase}
We assume that $S$ and $B$ are now capable of connecting off-chain. $S$ starts by sending $B$ the chunk $c_n$ and the intermediate hash $H_{n-1}^c$ (cf. Figure~\ref{fig:merkledamgard}).

After having received $c_n$, the buyer verifies that $f(f(...f(H_{n-1}^c,M_{CL}^n)...),M_1^n) = H(F)$. Note that $f(\cdot)$ corresponds to the one-way compression function, while $H(\cdot)$ corresponds to the iterated compression function over the whole input. 
If the hash is correct, $B$ signs an acknowledgment, $A_n$, and sends it to $S$. The acknowledgment states, that the valid file chunk $c_n$ was received by the buyer. 
$S$ then continues the protocol by giving $B$ the next file chunk $c_{n-1}$ and the intermediate hash $H_{n-2}^c$, to which $B$ answers by signing the acknowledgment $A_{n-1}$.

For every $k \in \{1,2,...,n\}$, $S$ sends $B$ the file chunk $c_{k}$ and the intermediate hash $H_{k-1}^c$, to which $B$ responds by signing $A_{k}$. By the definition of the Merkle-Damg\aa rd construction, $H^c_0 = H_0 = IV$, the initialization vector.
\subsubsection{Dispute Phase}
\label{disputemediation}
Both, buyer and seller are able to report the misbehaviour of the opponent and we consider non-responsiveness to be malicious. Both are therefore incentivized to respond during the execution of the protocol (within $MAX\_TIMEOUT$). In the following paragraphs, we discuss how to handle the misbehavior of either the seller or the buyer.

\paragraph{Misbehavior of the seller}
If $S$ interrupts the protocol or sends an incorrect file chunk at any point during the file transfer phase, $B$ can notify the smart contract that $S$ behaved incorrectly, noted $R^B_k$. $S$ then has $MAX\_TIMEOUT$ to prove the possession of the last block of $c_k$ (i.e.\ $M_{CL}^k$). \name offers two proof methods: First, $S$ publishes $M_{CL}^k$ and the digest (i.e.\ $h(M_{CL-1}^k)$) directly to the smart contract; then, the smart contract verifies that $f(h(M_{CL-1}^k),M_{CL}^k) = H_k^c$. Second, $S$ sends the zkSNARKs proofs to the smart contract without revealing any content of $F$, which is more privacy-preserving but consumes a higher on-chain cost. $S$ can choose the appropriate solution according to the value of $M_{CL}^k$. Depending on whether $S$ provides a valid proof, the smart contract assigns different payoffs to $S$ and $B$. We refer the reader to Section~\ref{payoffphase} for the exact payoff calculation.

\paragraph{Misbehavior of the buyer}
$S$ can report $B$ at any moment if the buyer acknowledges a chunk $c_k$ without having received it\footnote{This may look counter-intuitive, but this functionality enables us to remove the possibility of extortion attacks from the rational seller, as explained in section \ref{sec:nashequilibrium}}, noted $R^S_k$. The smart contract then allows $B$ within $MAX\_TIMEOUT$ to prove the knowledge of $h(M_{CL-1}^k)$ and $M_{CL}^k$, s.t. $f(h(M_{CL-1}^k),M_{CL}^k) = H_k^c$ (similar to the case of seller's misbehavior). $S$ can also report $B$ for not responding an acknowledgement $A_k$ by submitting the last received acknowledgement $A_{k+1}$.

\subsubsection{Payoff Phase}
\label{payoffphase}
Supposing that both parties behave honestly, once $S$ receives the acknowledgment of $A_1$ from $B$, $S$ can submit it to the smart contract to terminate the protocol. In this case, $S$ receives both the file price $F_p$ and the deposit $D_S$, while $B$ receives the deposit $D_B$ and has received the file $F$ through the off-chain communication channel. Depending on the protocol execution steps, for example in case of disputes, we define the particular payouts for the seller and the buyer. We distinguish between the payout defined by the smart contract (i.e., the monetary amount that the smart contract pays) and the utility function (i.e., the utility of the seller and buyer for each chunk of the file). For the ease of understanding, we provide a visual representation in Figure~\ref{fig:payouts}.

\begin{figure*}[!htb]
\centering
\includegraphics[width=0.55\textwidth]{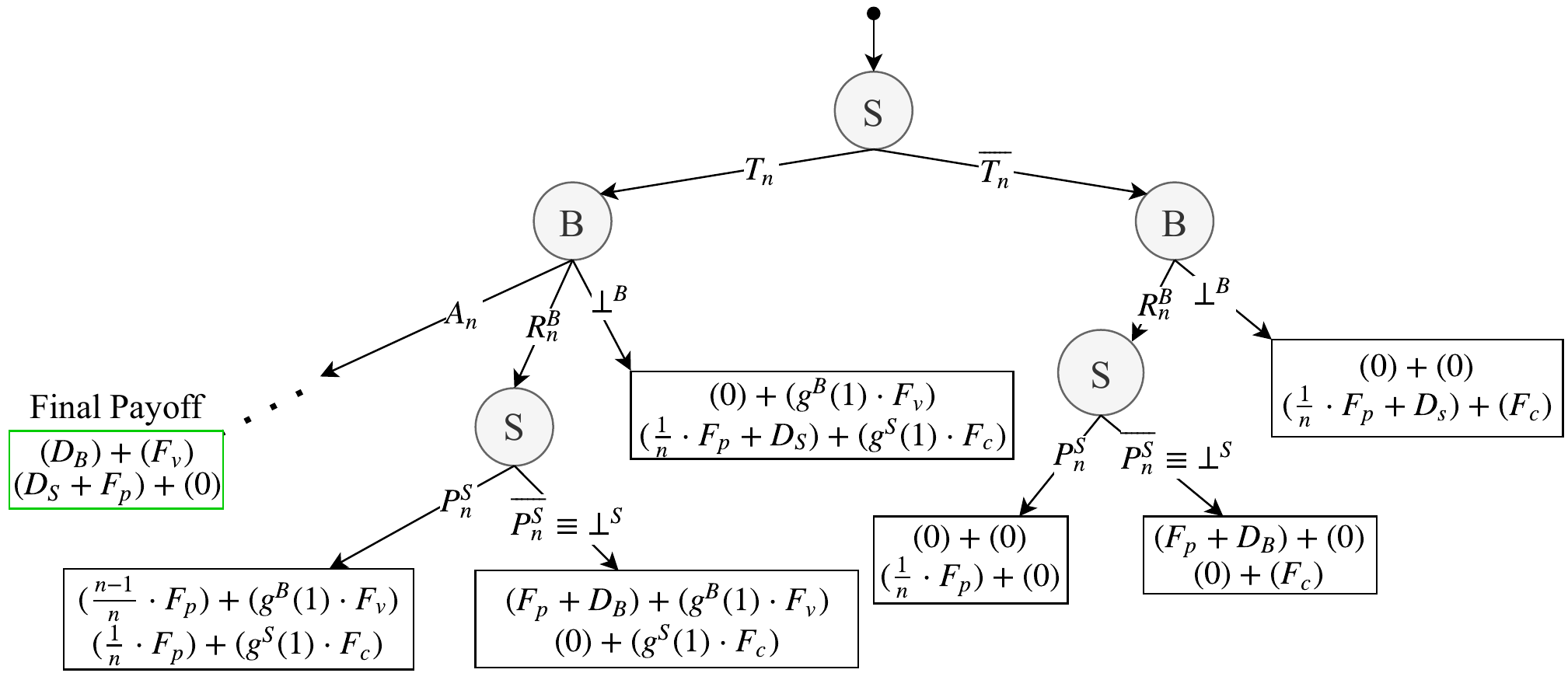}
\caption{Payoff values for the seller and the buyer, depending on the protocol transition steps for two file chunks. $S$ stands for the seller, $B$ stands for the buyer. For each payoff, we distinguish between the payoff by the smart contract (left side) and the utility payoff (right side). For each payoff, the upper payoff concerns the buyer, while the lower concerns the seller. If one participant does not follow the protocol (i.e does not send a chunk, does not acknowledge or goes offline), then the other ($x$) can initiate a dispute for chunk $k$, denoted $R^x_k$; it is then up to the first participant to resolve the dispute. To simplify the presentation, we omitted some sub-cases where a participant becomes offline and also the branch in which buyer acknowledges the receipt of the file chunk in case of not receiving it.} \label{fig:payouts}
\end{figure*}

\subsection{Extension to a Multi-Party File Exchange}
In the P2P exchange setup, files are split into $z$ segments. We extended the fingerprint of a file to the list of all segments' hashes (i.e.\ $\{H_1^s||H_2^s||...||H_z^s\}$). Similar to file chunks, we take the intermediate hash of the last block in segment $s_q$ as the segment hash $H_q^s$. $H_q^s$ is used as the target hash for $s_q$ in the exchange, as well as the initial vector for $s_{q+1}$, so that the buyer can verify the integrity of each segment. We can easily find $H_z^s=H(F)$. In the P2P setup, the smart contract needs to store additional parameters, namely the hash list and the segment size. Based on these, segments can be exchanged separately and concurrently --- segments are transmitted in chunks; the dispute mediation and the payoffs of all segments are independent.

We consider the buyer $B$ is willing to purchase a $z$-segment file $F$ with fingerprint $\{H_1^s||H_2^s||...||H_z^s\}$ from $z$ sellers $S_1, S_2, ..., S_z$. We assume all seller are sharing $F$ at the same price (i.e.\ $F_p/z$ per segment) and deposit (i.e.\ $D_S/z$ per segment). The P2P extension is executed as follows. 
\begin{description}
    \item[Serving] {Each seller starts to share $F$ by sending a transaction with the required parameters (i.e. the fingerprint etc.) and enough deposits which at least exceeds $D_S/z$. The amount of deposits determines the number of segments that can be exchanged concurrently.}
    \item[Handshake] {$B$ is required to prove the willingness of selling a specific chunk of each seller before initiating the exchanges. $B$ thus requests segment $s_q$ from seller $S_q$ with an off-chain message (cf. Equation~\ref{equ:requestmessage}).
    \begin{equation}\label{equ:requestmessage}
        \begin{split}
            r_{s_q} = \{\ &\mbox{buyer}:Pub_B,\  \mbox{seller}:Pub_{S_q},\\ &\mbox{fingerprint}:H_1^s||H_2^s||...||H_z^s,\\ &\mbox{segment}:H_q^s\ \}
        \end{split}
    \end{equation}
    The seller $S_q$ then signs the valid request as the confirmation. We utilize the BLS signature scheme which allows $B$ to aggregate $z$ signatures into a short one so that the smart contract can validate it efficiently. Once all the confirmations are received, $B$ sends the combination of the requests with the aggregated signature, the total price and the deposits to the smart contract.}
    \item[Transfer] {All sellers send chunks to $B$ in descending order. $B$ verifies the authenticity of every received chunk.
    \item[Dispute] The exchange threads are considered to be separate which means that any dispute is mediated independently. The dispute payoff scheme of the P2P scheme imitates the one-to-one exchange.}
    \item[Payoff] {After receiving all the segments, $B$ can notify the smart contract to redeem the deposits with an on-chain transaction. The file price is sent to the seller directly and the sellers' deposits are unlocked in the smart contract for recycling.}
\end{description}

The recycling mechanism facilitates the repeatable selling with one on-chain transaction, which reduces the margin cost of sellers. For a P2P trade, only two on-chain transactions are required (i.e.\ one for the buyer to initiate the transaction and one to finalize the transmission). Note that the same file can be priced differently by distinct sellers and we leave such specification for further work.

\section{Security Analysis}\label{sec:securityanalysis}
Similar to Franklin et al.~\cite{franklin1997fair}, we consider \name{} fair if the file exchanged is consistent with the known hash value. Further formalizing on the attacker model in Section~\ref{sec:attackermodels}, we moreover define utility functions in Section~\ref{sec:utilityfunctions} to measure how $S$ and $B$ value $F$ along with the transmission of the file chunks. Under our rational adversary assumption, in any situation, a player takes action to maximize the final economic benefits. An adversary is not willing to perform an attack if the financial return is lower than the payoffs of following the protocol (i.e.\ $F_p + D_S$ for $S$ and $D_B + F_v$ for $B$). We also assume the trade only proceeds s.t.\ $F_c\leq F_p$ and $F_p\leq F_v$.

\subsection{Attack Analysis}\label{sec:attackanalysis}
\label{attack_analysis}
In this section, we analyze considered malicious behaviours.
\subsubsection{$S$ doesn't have any knowledge of $F$ and trades with $B$} Our mechanism guarantees that $S$ loses the deposit performing this attack. Such an attack is therefore irrational.

\subsubsection{$S$ sends chunks out of order} In each round, $B$ only expects the chunk with the target hash value and considers others invalid, even though $S$ sends a chunk that belongs to $F$. Sending chunks out of order is deemed to be malicious and causes a financial loss of $S$.

\subsubsection{$S$ sends all chunks but $c_1$ of a compressed or encrypted file} In case $B$ reports and then $S$ proves, the payoff for $S$ is $\frac{n-1}{n}\cdot F_p + F_c$, whereas the honest payoff for $S$ is $F_p + D_S$. This attack is accordingly ruled out by choosing the seller's deposit, s.t.\ $D_S > F_c - \frac{1}{n} \cdot F_p$. We further discuss deposit determination in Section~\ref{sec:nashequilibrium}.

\subsubsection{$S$ does not send any file chunk $c_k$ or sends a wrong chunk, and $B$ may be forced to acknowledge it in fear of losing deposit $D_B$} The possibility for $B$ to acknowledge an undelivered chunk gives $S$ an incentive not to send some chunk, because $B$ will acknowledge it anyway to retrieve the deposit. We introduce a counter-intuitive mechanism of offering $S$ the option to report that $B$ acknowledged some chunk $c_k$ without receiving it. If $B$ then fails to prove this, $S$ receives the whole sum $D_S + D_B + F_p$ . Therefore, a rational buyer will never acknowledge an undelivered chunk.

\subsubsection{$B$ extorts $x$ from $S$ after the protocol initiated. Otherwise, $B$ will report the dishonesty of $S$} If disagreeing, $S$ will lose the deposit $D_S$ and receives $F_p/n$. Otherwise, $S$ can recover $D_S+F_p-x$ until the protocol is finalized successfully. This attack seems feasible when $x < D_S+\frac{n-1}{n}\cdot F_p$. However, the optimal strategy of $B$ is to blackmail $S$ in the next every round till the end, which means $S$ will lose more if surrendering. The best response of $S$ is hence to revolt rather than to surrender. Given that both $S$ and $B$ are rational and $B$ knows $S$ is rational, $B$ will not initiate this extortion attack because $S$ is not going to agree.

\subsubsection{$B$ does not acknowledge $c_2$, reports not receiving $c_2$ and tries to brute-force search $c_1$, if $c_1$ is low-entropy} We assume that $B$ can find $c_1$ successfully. Nevertheless, once $S$ prove with $M^2_{CL}$, $B$ only recovers $\frac{2}{n}\cdot F_p$ but loses the deposit $D_B$. Therefore, this attack is irrational with a reasonably large $D_B$. Similarly, due to the risk of losing deposits, the attack that $B$ doesn't acknowledge $c_1$ is unreasonable.

\subsection{Utility functions}
\label{sec:utilityfunctions}
Files differ on how the seller and the buyer value them, in particular regarding individual file chunks. This is relevant for calculating optimal deposit sizes for the existence of a subgame perfect Nash Equilibrium. In practice, though, we do not necessarily know how each participant values a particular file; we, therefore, analyze the worst possible case and find participants' deposits in such a way that the honest behavior is always the subgame perfect Nash Equilibrium. We use a special deposit refunding scheme to achieve that the honest behavior of both seller and the buyer is the subgame-perfect equilibrium solution, see \cite{gersbach} for related schemes.  For some files, the utility function of a participant may grow close to linear (e.g.\ image files\footnote{We acknowledge that image files might be compressed in such a way that coarse-grained details are first transmitted.}, movies), while other files are only valuable in their entirety (e.g.\ encrypted/compressed files). The third category of files might be files where the value of the first chunk outweighs the value of the sum of all other chunks heavily. We define $g^x(k)$ to be the relative utility function, for party $x \in \{Buyer, Seller\}$, after $k$ chunks of the file $F$ are sent. Note that the utility function of the buyer, $g^B(k)$, is monotonically increasing with the number of chunks as he gains more information about the file; on the other hand the utility function of the seller, $g^S(k)$, is monotonically decreasing with the number of chunks as he loses exclusivity of the file.
In order to prove that our protocol is incentive-compatible (it is in the participants' best interest to follow the rules), we analyze the arbitrary utility functions of the buyer and the seller in a rational adversary setting.

\subsubsection{Require the entire file}
For encrypted files, the utility function of the buyer is 0 until the chunk $c_1$ is received (cf. Equation~\ref{eq:entirefile1}).
\begin{equation}\label{eq:entirefile1}
  g^{B}(k) = \left\{
    \begin{array}{@{}l@{\thinspace}l}
       1 \text{, if } k=n \\
       0 \text{, otherwise} \\
    \end{array}
   \right.
\end{equation}

The worst case for the buyer is that the seller's utility function is as follows (cf. Equation~\ref{eq:entirefile2}).
\begin{equation}\label{eq:entirefile2}
  g^{S}(k) = \left\{
    \begin{array}{@{}l@{\thinspace}l}
       0 \text{, if } k=n \\
       1 \text{, otherwise} \\
     \end{array}
   \right.
\end{equation}

In that case, the seller wants the exact opposite of the buyer: he wants to send all the chunks except the last one, while the buyer is only interested in the last chunk.
\name{} therefore requires the deposit of the seller to be sufficiently high, to guarantee that the seller will send the entirety of the file.

\subsubsection{Only require the first chunk}
Recall, that the first chunk that the seller sends corresponds to the end of the file (containing padding and the file length). In some cases, the buyer may solely be interested in knowing this first chunk, for example if he only needs to know the length of the file or in some specific situations where the last chunk is particularly valuable (cf. Equation~\ref{eq:firstchunk1}).
\begin{equation}\label{eq:firstchunk1}
  g^{B}(k) = \left\{
    \begin{array}{@{}l@{\thinspace}l}
       1 \text{, if } k \geq 1 \\
       0 \text{, if k = 0} \\
     \end{array}
   \right.
\end{equation}

The worst case for the seller would be that the buyer's utility function is the exact opposite of the seller (cf. Equation~\ref{eq:firstchunk2}), because after receiving just the first chunk, the buyer already has no incentive to continue the transfer. Yet, the smart contract will only reward the seller with $\frac{1}{n} \cdot F_p$. From the perspective of the buyer, the bulk of the value has already been exchanged (that is, the buyer has already received a value of $F_v$, although the seller only received $\frac{1}{n} \cdot F_p$).
\begin{equation}\label{eq:firstchunk2}
  g^{S}(k) = \left\{
    \begin{array}{@{}l@{\thinspace}l}
       0 \text{, if } k \geq 1 \\
       1 \text{, if k = 0} \\
     \end{array}
   \right.
\end{equation}

All the remaining possible utility functions lie in between these two extremes. In the following section, we make use of the observations derived from the behavior of our mechanism to parametrize the buyer's and seller's deposits, $D_B$ and $D_S$, in such a way that honest behavior is a Nash Equilibrium. This implies that it is in both participants' best interest to behave honestly, as the deposits they provide to the smart contract are high enough and the deposit refunding scheme guarantees higher utility in case of honest behavior than in case of misbehaving.

\subsection{Subgame perfect Equilibrium}\label{sec:nashequilibrium}

In this section, we show that honest behavior in \name{} is a subgame perfect Nash equilibrium.

\subsubsection{Preliminaries}

 We denote the game with $G$. There are two players in $G$, seller and buyer. Game $G$ is sequential, players take their decisions one after another. The strategies available to the seller depending on the level of the game are the following: sending the file chunk, not sending the file chunk, prove sending the file chunk, timing out, reporting the wrong acknowledgment of the buyer and not reporting the wrong acknowledgment. The strategies available to the buyer are acknowledgment of the receipt of the file chunk, claiming not receiving the file chunk, timing out and proving the receipt of the file chunk.  To guarantee honest behavior from both participants, mechanism require them to deposit a monetary amount, $D_B$ for the buyer and $D_S$ for the seller. The \name{} protocol sets what those deposits should be, depending on the price $F_p$ agreed upon. Recall that the smart contract has authority over the money sent by the participants; in case of misbehavior, the smart contract acts as an impartial judge for dispute mediation and distributes payoffs according to the protocol.

\subsubsection{Double Auction}
Note that exchanging the file and setting price is an instance of the double auction problem~\cite{myerson}. In the double auction, a single seller is the owner of one item which is traded with a single buyer. Both the seller and the buyer have privately known values for consuming the item, in our case $F_c$ for the seller and $F_v$ for the buyer. These values are unknown to the mechanism. Intuitively, there are three minimal requirements for the "good" mechanism, individual rationality: the participation of the agents is voluntary, meaning that at any point they may leave the mechanism and consume their initial endowments. The second requirement is budget balance: the mechanism is not allowed to subsidize the agents or to make any profits, the latter means that the amount paid by the buyer is completely transferred to seller. The third requirement is incentive compatibility (IC), meaning that reporting true valuations is the best strategy for both seller and the buyer. 
The seminal paper of Myerson and Satterthwaite \cite{myerson} studies the double auction problem. The well-known impossibility result states that in this setting there is no
mechanism which is fully efficient and at the same time is incentive compatible, individually rational and budget balanced. Blumrosen and Dobzinski~\cite{blumrosen} design an approximately efficient mechanism for achieving optimal expected social welfare function,  that is equal to the maximum of the two private values $F_c$ and $F_v$. In other words, the optimal social outcome is achieved when the item goes to the person who values it the most. It is proved that the only mechanism which satisfies the three properties discussed above is a posted price auction, in which the mechanism posts some price, in our setting $F_p$, if $F_c\leq F_p$ and $F_p\leq F_v$ then the trade takes place. Buyer pays $F_p$ to the seller and gets the item. If one of the conditions does not hold, then no trade takes place. The posted price mechanism is clearly individually rational, because the utilities of both players are non-negative in either case. Reporting true values is clearly a dominant strategy for both players, because reports from both sides do not affect their payoffs, in the calculation of the utility only the private value is used. If any of the players changes the true anticipated outcome of the mechanism, it only decreases his utility. Blumrosen and Dobzinski \cite{blumrosen} design the mechanism which achieves a social welfare that is equal to 0.63 times the optimal social welfare, given only the probability distribution on the value $F_v$. On the negative side, Leonardi et al. \cite{leonardi} prove that there is no mechanism which approximates the optimum by more than a factor of $0.74$ in the worst case. Although, providing efficient posted price can also be integrated into our mechanism, at this point we assume that agreeing on the price $F_p$ is done outside \name{}. Over the time, information can be aggregated to refine probability distributions of the valuations and consequently, more efficient posted price mechanisms can be implemented. 

\subsubsection{Finding the right size for deposits}

In this section, we will determine the optimal sizes for the deposits of both players of the game, in order to guarantee that the honest behavior of both players corresponds to the subgame-perfect equilibrium solution. 

\begin{theorem}
\label{subgame_perfectness}
There exist deposits $D_S$ and $D_B$ such that honest behavior of both players corresponds to the subgame-perfect Nash equilibrium of the extensive form game $G$.
\end{theorem}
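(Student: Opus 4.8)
The plan is to prove Theorem~\ref{subgame_perfectness} by backward induction on the finite extensive-form game tree sketched in Figure~\ref{fig:payouts}. Because the file is transmitted in a fixed number $n$ of chunks and each round contains only a bounded set of moves (the seller's \emph{send}/\emph{withhold}/\emph{prove}/\emph{time-out}/\emph{report} decisions and the buyer's \emph{acknowledge}/\emph{deny}/\emph{prove}/\emph{time-out} decisions), the tree has finite depth, so a subgame-perfect equilibrium can be obtained by folding the tree from the leaves toward the root. The concrete goal is to write down, at \emph{every} decision node (including off-equilibrium-path ones, which is what upgrades Nash to subgame-perfect), the inequality on $D_S$ and $D_B$ that makes the prescribed honest action a strict best response given the already-resolved continuation subgame, and then to check that the resulting finite system of inequalities is simultaneously satisfiable. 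We take the price $F_p$ as fixed with $F_c \le F_p \le F_v$ (price-setting is handled outside the protocol, cf.\ the double-auction discussion), so honest completion yields the individually rational payoffs $F_p + D_S$ for $S$ and $D_B + F_v$ for $B$.

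First I would fix these honest continuation payoffs as the value of the root and then traverse the tree bottom-up. At a node where $j$ chunks have already been transferred and verified and it is the seller's turn, the non-honest alternatives are to withhold or time out (after which $B$ reports $R^B$ and $S$ either cannot prove the chunk --- in the worst case $S$ genuinely lacks it --- or proves it) or to falsely report $B$; comparing the contract payoffs of these branches (proportional release $\tfrac{j}{n}F_p$ together with forfeiture of $D_S$, versus the full $D_S+D_B+F_p$ collected when $B$ cannot rebut a justified report) against honest continuation yields a lower bound on $D_S$. The binding case is the worst-case utility profile of Section~\ref{sec:utilityfunctions} in which the seller values everything but the last transmitted chunk, and it reduces to the explicit condition $D_S > F_c - \tfrac{1}{n}F_p$ already isolated in Section~\ref{attack_analysis}. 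Symmetrically, at a buyer node the deviations are denying a received chunk, acknowledging an unreceived chunk, or timing out; the counter-intuitive branch in which $S$ may report a false acknowledgement and collect $D_S+D_B+F_p$ whenever $B$ cannot exhibit the preimage forces a rational buyer never to acknowledge an undelivered chunk, while the loss of $D_B$ on an unjustified denial or a time-out produces a lower bound on $D_B$ whose binding case is the worst-case utility in which the buyer values only the first transmitted chunk; this reduces to a bound of the form $D_B \ge \tfrac{n-1}{n}F_p$.

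The step I expect to be the main obstacle is the blackmail/extortion subgame, because there the tempting deviation is not a single move but a threatened \emph{sequence}: at an intermediate node a player (say $B$) can demand a side payment $x$ under threat of reporting, and for small $x$ conceding $x$ looks better to $S$ than the $\tfrac{j}{n}F_p$ he could otherwise salvage. This has to be handled inside the backward induction rather than around it. The resolution is that the blackmailer's own subgame-perfect continuation after a concession is to blackmail again in every remaining round, so $S$'s real comparison is between paying $x$ in each of the remaining rounds and refusing once and absorbing the one-shot dispute outcome; the deposit-refunding rule is precisely what makes ``revolt'' weakly dominate that comparison, so no extra constraint on $D_S,D_B$ is needed beyond those above, and by the symmetry of the mechanism the identical argument rules out seller-initiated extortion. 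Once this subgame is settled, I would finish by collecting the finitely many bounds produced along the induction --- each of the form $D_S \ge \phi(F_p,F_c,F_v,n)$ or $D_B \ge \psi(F_p,F_c,F_v,n)$ --- and observing that choosing $D_S$ and $D_B$ to be any common value strictly exceeding the maximum of the right-hand sides satisfies all of them at once; hence the honest strategy profile is a (strict) best response at every node of $G$, which by definition makes it a subgame-perfect Nash equilibrium, establishing the theorem.
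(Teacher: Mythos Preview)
Your proposal is correct and follows the paper's approach: first prune the ``buyer acknowledges an undelivered chunk'' branch via the seller's strictly dominant report-and-collect option, then backward-induct over the remaining (linearly-sized) tree to extract deposit lower bounds and verify they are simultaneously satisfiable. The only cosmetic differences are that the paper carries the arithmetic through to the concrete sufficient choice $D_S=D_B=F_p$ rather than stopping at ``take the max of the finitely many bounds,'' and it treats the extortion argument as a post-proof remark relying on common knowledge of rationality rather than folding it into the induction as you do.
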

\begin{proof}
See Appendix~\ref{proof}.
\end{proof}


From the analysis we exclude the case where a (malicious) buyer may have a file already. In this case, a buyer can acknowledge the receipt of the chunk he has not received, and the rationality analysis boils down to the belief of the seller about the state. Note that given the seller believes the buyer does not have a file, he will send each chunk anyway. Our analysis solves this case. It is easy to observe that the linear refunding scheme by the mechanism of the price $F_p$ is arbitrary and serves only the purpose of simplicity and fairness. Attacks in the form of blackmail do not work. This is guaranteed by the assumptions of rationality and complete knowledge. If any of the parties blackmails the other, asking to collaborate or, otherwise he will make the other party lose deposit, the other party can ignore this kind of threat. Since the other party knows the blackmailing party is rational, the threat will never be realized. One chunk of the file is displayed on blockchain, in case of dispute. This can, in principle, be exploited by malicious buyers, to reveal the file chunk by chunk. On the other hand, since for each chunk they will lose a complete deposit, this is not a rational way to behave. 

\subsection {Secure Multi-Party Exchange}
\label{sec:securemultipartyexchange}
As file segments are exchanged independently, multi-party exchanges are secure as long as all segments are exchanged securely. We prove in Appendix~\ref{proof} that honest behaviour corresponds to a subgame perfect Nash equilibrium in the single seller case when both $D_S$ and $D_B$ are at least $F_p$. Similarly, in the multi-party case, the security of a single thread relies on the conditions that \textbf{(1)} the buyer's deposit for a segment is not smaller than the segment price (i.e.\ $D_B/z \geq F_p/z$); \textbf{(2)} the selling cost of a segment is not larger than the seller's deposit for a segment. We reasonably conclude that the files where value is evenly distributed (e.g.\ images and movies) meet both conditions if $D_S \geq F_p$ and $D_B \geq F_p$. However, for the value-concentrated files, the selling cost of a single segment varies from $0$ up to $F_p$. Therefore, $D_S$ is required to be increased to at least $z\cdot F_p$ in order to preserve the security properties in every exchanging thread.

\section{Implementation and Evaluation}\label{sec:evaluation}
\name{} is a symbiosis of an off- and on-chain component: an Ethereum smart contract serves for dispute mediation; and, an off-chain client is responsible for the actual file transfer (note that the off-chain client needs to monitor the smart contract state). We implement the off-chain component as a browser application written in Typescript ($7\,500$ LOC), while the Ethereum component is a Solidity smart contract ($728$ LOC). The seller and buyer clients first establish a WebRTC connection. Once the connection is established, the protocol as described in Algorithm \ref{alg:protocol_pseudocode} is executed. Note that both clients are required to be synchronized with the blockchain. In consideration of the weakness of SHA-1~\cite{stevens2017first}, we chose SHA-256 as the cryptographic compression function. We implement the SHA-256 hash function in the smart contract and the clients, such that we can execute the Merkle-Damg\aa rd construction on individual file blocks. We also implement a zkSNARK contract for the privacy-preserving possession verification.

\subsubsection{On-chain Costs}
Given no dispute, the on-chain operations of starting a sale, purchasing a file, starting a dispute and finalizing an exchange, cost below $170\,000$ gas (\$$0.09$ at a gas price of $3$ GWei\footnote{At the time of writing, the US dollar to Ether exchange rate is around \$$200$/Ether.}). In \name{}, the most expensive transaction is to resolve a dispute by verifying the possession of a block of 64 bytes. A cleartext verification by calculating SHA-256 hash directly costs about $350\,000$ gas (\$$0.22$), whereas the zkSNARK scheme costs $1\,658\,000$ gas (\$$1.01$). The cost is constant regardless of the file size or chunk size. In a P2P setting, verifying a BLS signature costs approximately $365\,000$ gas (\$$0.23$)~\cite{kfichter69:online}. We estimate the cost of purchasing a file from $10$ sellers to be below $450\,000$ gas (\$$0.28$).

Assuming a PoW blockchain supports ten transactions per second, this amounts to $864\,000$ transactions per day. An estimated upper limit of \name's throughput is thus $432\,000$ P2P downloads per day. This holds assuming the absence of disputes and assuming that a seller is ``re-using'' its collateral for $F$. Note that the measured number of daily downloads via BitTorrent between Dec’03 to Jan’04, is in the range of $237\,500$ to $576\,500$~\cite{pouwelse2005bittorrent}.

\subsubsection{Off-chain Costs}
Transmitting a file of $n$ chunk requires $n$ rounds of interactions between the seller and buyer. The chunk length can be adjusted flexibly without increasing any on-chain cost (e.g.\ a file of $256$MB transmitted in chunks of $256$KB needs $1024$ rounds of interactions which would be reduced to $512$ if increase the chunk size to $512$KB).

\subsubsection{Zero Knowledge Proof Costs}
The zkSnark circuit to verify the possession of a 64-byte block contains $29\,339$ constraints. Generating a zkSnark proof, which has a constant size of $2\,294$ bits, costs $1.585$ seconds on a virtual machine with $16\times 2$GHz CPUs.

\section{Related Work}\label{sec:relatedwork}
Fair exchange has been extensively studied regarding fair multi-party computation \cite{bentov2014use,andrychowicz2014fair,andrychowicz2014secure,banasik2016efficient,bao1999multi,kilincc2015optimally,lysyanskaya2006rationality}, fair multi-party exchange \cite{franklin1997fair,asokan1998optimistic,park2003constructing,dodis2003breaking,zhou2000some,kupccu2010usable,garay2003timed,brickell1987gradual}, contract signing \cite{ateniese1999efficient,gauthierdickey2014secure,bao1998efficient,boldyreva2003threshold} and proof of ownership \cite{halevi2011proofs,di2012boosting}. 
The impossibility of fair exchange without a trusted third party has been proven in \cite{pagnia1999impossibility}. With the recent decade development of the blockchain technology, smart contracts on permissionless blockchains are now generally considered as trusted third parties, which makes fair exchange between multiple parties feasible.
In the following, we mainly discuss related solutions based on permissionless blockchains. NashX~\cite{nashx} attempts to address the fair exchange problem leveraging Bitcoin and the idea of Nash equilibria. Before initiating a trade, both parities deposit some amount of Bitcoin that is more valuable than the transaction amount. Any default may cause the destruction of all the deposits.
This solution is not resistant against extortion attacks, even against a rational adversary, is not automatic and does not provide integrity verification.
FairSwap by Dziembowski et al.~\cite{dziembowski2018fairswap} allows secure digital content exchange through sending ciphertexts off-chain and revealing the key on-chain. FairSwap relies on a judge smart contract to arbitrate the misbehavior of the sender. OptiSwap~\cite{eckeyoptiswap} further extends FairSwap that reduces the overheads in communication and computation. 

\section{Conclusion}\label{sec:conclusion}

In this work, we present a practical and efficient protocol, which allows a buyer to purchase a file from one or multiple sellers. The buyer is assumed to know the cryptographic hash of the file, which allows us to perform an automated dispute mediation through a blockchain based smart contract.
We show that honest behavior in \name{} is a Nash equilibrium under a rational adversary. Under honest execution of the protocol, our implementation only requires three on-chain transactions for the exchange of any file size. In the case of disputes, we require up to two additional on-chain transactions. Contrary to previous work, \name{} has an inherent integrity verification, for every exchanged file chunk. Our evaluations shows that the protocol is practical.

\section*{Acknowledgment}
This work is partially supported by the Hasler Foundation.


\bibliographystyle{IEEEtran}
\bibliography{references}
\appendix
\section{Appendix}

\subsection{Proof of Theorem~\ref{subgame_perfectness}}
\label{proof}
\begin{proof}
In Section~\ref{sec:attackanalysis}, we state that the buyer will never acknowledge a file chunk that hasn't been received, the reason being that acknowledging gives the seller the chance to claim the whole sum, and for the seller it is always strictly dominating strategy to do so. 
We disregard this part of the sequential game $G$ completely. This saves us from considering a game tree which consists of exponentially many states, so we instead deal with a tree of linear size.


We use backwards induction. The claim of the induction is that in each subtree rooted at $S_k$, honest behavior is the subgame perfect equilibrium, i.e. the seller sends chunk at each state and buyer acknowledges it. Final payments from the smart contract in case of honest behavior are $D_B$ for the buyer and $D_S+F_p$ for the seller. 

In the subtree rooted at the node $B_{k}''$, which corresponds to the state that the seller did not send the $k$-th chunk, buyer will always choose to report, $R_{n-k+1}^{B}$, and because of this the seller ends up with the utility 
$$\text{max}\{\frac{k}{n}F_p+g^S(k)F_c, g^S(k-1)F_c\}\leq \frac{k}{n}F_p+F_c,$$
on the other hand, the final utility for the seller in the honest behavior of both players  is equal to $F_p+D_S$, by backwards induction assumption. This implies that if $D_S$ is at least $F_p$ then $F_p+D_S\geq \frac{k}{n}F_p+F_c$, because the right-hand side is maximized when $k=n$ and we also know that $F_p\geq F_c$ by definition, otherwise the trade does not take place at all.

For the buyer an analogous argument works; in the state $B_{k}'$, where the seller already sent the $k$-th chunk, the buyer's payoff in case of playing $R_{n-k+1}^B$ is equal to $$\frac{n-k}{n}F_p+g^B(k)F_v,$$ while by the backwards induction hypothesis, if he plays $A_{n-k+1}$, his utility is equal to $D_B+F_v$. If we take $D_B\geq F_p$, then at each node $B_k'$, the buyer will play honestly, because $D_B+F_v\geq \frac{n-k}{n}F_p+g^B(k)F_v$ for any $k$. Therefore, the honest behavior corresponds to the subgame-perfect Nash equilibrium of the game $G$ with deposits of size $D_S=F_p$ and $D_B=F_p$. 

\end{proof}


For the exposition purposes we do not consider all cases as those cases are treated analogously and the same result can be derived.

\end{document}